\documentclass[letterpaper, 10 pt, conference]{ieeeconf}  

\IEEEoverridecommandlockouts        

\usepackage{mathtools}
\usepackage{setspace}

\usepackage{amsthm}

\theoremstyle{plain}
\newtheorem{theorem}{Theorem}

\newtheorem{lemma}{Lemma}
\newtheorem{assumption}{Assumption}

\theoremstyle{definition}
\newtheorem{definition}{Definition}

\newtheorem{example}{Example}
\theoremstyle{plain}

\theoremstyle{remark}
\newtheorem{remark}{Remark}

\newcommand{\mE}{\ensuremath{\mathbb{E}}}

\newcommand\gC{{\mathcal{C}}}

\providecommand{\keywords}[1]{\vspace*{0.5\baselineskip}\par\noindent\textbf{Keywords:} #1}

{%
\begin{list}{#1}{
\vspace{-\topsep}
\vspace{-\partopsep}
\setlength{\itemindent}{0cm}
\setlength{\rightmargin}{0cm}
\setlength{\listparindent}{0cm}
\settowidth{\labelwidth}{#1}
\setlength{\leftmargin}{\labelwidth}
\addtolength{\leftmargin}{\labelsep}
\setlength{\itemsep}{0cm}
}%
}%
{%
\end{list}
\vspace{-\topsep}
\vspace{-\partopsep}
}

{\begin{enumerate}%
\renewcommand{\theenumi}{\roman{enumi}}%
}%
{\end{enumerate}}

\usepackage{bbold}

\usepackage[sort, nocompress]{cite}
\usepackage{amsmath,amssymb,amsthm, amsfonts, mathrsfs}
\usepackage{graphicx}
\usepackage{dsfont}
\usepackage{xcolor}
\usepackage{comment}

\usepackage{cleveref}
\Crefname{equation}{}{}
\crefname{equation}{}{}
\usepackage{footmisc}
\usepackage{csquotes}
\let\labelindent\relax
\usepackage{enumitem}
\usepackage{bbm}
\usepackage{algorithm}
\usepackage{algpseudocode}
\usepackage{tabularx}
\usepackage{booktabs}

\usepackage{caption, subcaption}

\def\BibTeX{{\rm B\kern-.05em{\sc i\kern-.025em b}\kern-.08em
    T\kern-.1667em\lower.7ex\hbox{E}\kern-.125emX}}

\newcommand{\eg}{\textit{e.g., }}
    
\newcommand{\R}{\mathbb R}

\newcommand{\spy}{F} 
\newcommand{\rpy}{G} 
\newcommand{\spx}{Q} 

\title{\LARGE \bf
Fractional Risk Analysis of Stochastic Systems with Jumps and Memory
}

\author{Yimeng Sun$^{1*}$, Zhuoyuan Wang$^{1*}$, Xiaole Zhang$^{2*}$, Heng Ping$^{2}$, Jintang Xue$^{2}$, Paul Bogdan$^{2}$, Yorie Nakahira$^{1\dagger}$
  \thanks{$^{1}$Carnegie Mellon University. $^{2}$University of Southern California.}
  \thanks{$^{*}$Equal contribution. $^{\dagger}$Correspondence: {\tt\small yorie@cmu.edu}.}
}

\begin{document}

\maketitle
\thispagestyle{empty}
\pagestyle{empty}

\begin{abstract}

Accurate risk assessment is essential for safety-critical autonomous and control systems under uncertainty. In many real-world settings, stochastic dynamics exhibit asymmetric jumps and long-range memory, making long-term risk probabilities difficult to estimate across varying system dynamics, initial conditions, and time horizons. Existing sampling-based methods are computationally expensive due to repeated long-horizon simulations to capture rare events, while existing partial differential equation (PDE)-based formulations are largely limited to Gaussian or symmetric jump dynamics and typically treat memory effects in isolation.
In this paper, we address these challenges by deriving a space- and time-fractional PDE that characterizes long-term safety and recovery probabilities for stochastic systems with both asymmetric Lévy jumps and memory. This unified formulation captures nonlocal spatial effects and temporal memory within a single framework and enables the joint evaluation of risk across initial states and horizons. We show that the proposed PDE accurately characterizes long-term risk and reveals behaviors that differ fundamentally from systems without jumps or memory and from standard non-fractional PDEs. Building on this characterization, we further demonstrate how physics-informed learning can efficiently solve the fractional PDEs, enabling accurate risk prediction across diverse configurations and strong generalization to out-of-distribution dynamics.
\end{abstract}



\section{Introduction}

Safety is critical for autonomous and control systems operating under uncertainty. In many real-world settings, stochastic dynamics exhibit non-Gaussian fluctuations, asymmetric jumps, and long-range temporal dependence, as observed in applications including power grids~\cite{sia2018pmu} and physiological systems~\cite{yin2023fractional}. Under such conditions, safety must be characterized probabilistically. In particular, estimating the long-term probability that system trajectories enter unsafe regions is important for risk assessment and safety-critical control, but remains challenging due to the underlying stochastic complexity and extended time horizons involved~\cite{wang2022myopically, wang2024myopically}.

A variety of methods have been developed to estimate such long-term risk probabilities. Sampling-based approaches such as Monte Carlo simulation and its variants, including importance sampling and subset simulation, are widely used~\cite{rubino2009rare,cerou2012sequential,zuev2015subset}. However, these methods are often computationally expensive, since they must repeatedly simulate the stochastic dynamics over long horizons to capture rare failure events, and the calculations typically need to be redone for different system dynamics and initial conditions. Recently, partial differential equation (PDE)-based methods have shown promise by characterizing risk probabilities through deterministic equations derived from the underlying dynamics~\cite{chern2021safe,wang2025generalizable}. Compared with trajectory-based sampling, this perspective can jointly solve for risk probabilities across different initial states and time horizons, while also enabling physics-informed learning methods to serve as neural surrogate solvers for parametric and functional variants of the problem~\cite{raissi2019physics,pang2019fpinns,li2024physics}. However, existing PDE-based formulations remain largely limited to Gaussian noise and Markovian settings~\cite{chern2021safe, wang2025generalizable}, and only a few results have considered symmetric jumps or time-change effects in isolation, rather than stochastic systems that combine asymmetric jumps with memory~\cite{gao2014mean, yang2026efficient, abundo2020fractional}.

In this work, we extend the PDE-based risk characterization framework to a broader class of stochastic systems with both asymmetric jumps and memory. Specifically, we derive a space- and time-fractional PDE that characterizes the evolution of long-term risk probabilities for systems driven by general asymmetric Lévy processes and time-change effects (Theorem~\ref{thm:safe_prob_pde}-\ref{thm:rec_prob_pde}). We show that this fractional PDE accurately characterizes long-term risk values, capturing behaviors that differ fundamentally from systems without jumps or memory and from those described by standard non-fractional PDE formulations (Fig.~\ref{fig:rec_space_pde_vs_mc}-\ref{fig:rec_time_pde_vs_mc}). Building on this characterization, we further demonstrate how physics-informed learning techniques can be leveraged to efficiently estimate the corresponding fractional PDE solutions (Section~\ref{sec:safe_prob_exp}). The resulting framework integrates physical constraints with limited data and enables accurate prediction of long-term risk of diverse configurations while generalizing to out-of-distribution system dynamics (Fig.~\ref{fig:2D_ood_results} and Table~\ref{tab:pino_mc_comparison}).


\section{Problem Formulation}

\subsection{System Dynamics}

We consider a class of nonlinear stochastic control systems subject to heavy-tailed disturbances and time-change memory effects. We first introduce the underlying system without memory. Let $X_s \in \mathbb{R}^n$ denote the system state at time $s \ge 0$, evolving according to the stochastic differential equation
\begin{equation}\label{eq:x_trajectory}
    dX_s = f(X_s)\,ds + \sigma(X_s)\, dL_s^{(\alpha,\Lambda)},
\end{equation}
where $f:\mathbb{R}^n\to\mathbb{R}^n$ is the drift vector field, $\sigma:\mathbb{R}^n\to\mathbb{R}^{n\times k}$ is the diffusion coefficient, and $L_s^{(\alpha,\Lambda)}$ is an $\alpha$-stable Lévy process in $\mathbb{R}^n$ with $0<\alpha< 2$. The finite spectral measure $\Lambda$ on the unit sphere $S^{n-1}$ characterizes the directional distribution and skewness of the jump noise. In particular, the process is symmetric if and only if $\Lambda(A)=\Lambda(-A)$ for all measurable sets $A\subset S^{n-1}$.

\begin{assumption}
\label{asp:dynamics_regularity}
We assume that $f$ and $\sigma$ are bounded and globally Lipschitz.
\end{assumption}
Under Assumption~\ref{asp:dynamics_regularity}, there exists a unique strong solution to~\eqref{eq:x_trajectory} for the base system~\cite{applebaum2009levy}.
Note that dynamics~\eqref{eq:x_trajectory} can capture not only autonomous systems, but also closed-loop systems with memoryless state feedback control (\eg PID and robust controllers).


To model time-delay effects and memory phenomena, we introduce a stochastic time change. Let $(D_u)_{u \ge 0}$ be an $\beta$-stable subordinator with $0<\beta<1$, independent of $(X_s)_{s\ge0}$. Define its inverse (first-passage-time) process by
\begin{equation}
    E_t := \inf \{ u \ge 0 : D_u > t \}, \qquad t \ge 0.
\end{equation}
The state process with memory is defined via the following time change
\begin{equation}
\label{eq:time_change}
    Y_t := X_{E_t}, \qquad t \ge 0.
\end{equation}
This construction induces non-Markovian dynamics and captures heavy-tailed waiting-time effects.

\subsection{Safety Specification}

We formalize safety via a barrier function characterization.

\begin{definition}[Safe Set~\cite{ames2019control}]
Let $\phi : \mathbb{R}^n \to \mathbb{R}$ be a twice continuously differentiable function. The \emph{safe set} is defined as the superlevel set
\begin{equation}
    \mathcal{C} := \{ x \in \mathbb{R}^n : \phi(x) \ge 0 \}.
\end{equation}
The interior, boundary, and unsafe region are defined as
\begin{align}
\operatorname{int} (\gC) &= \{ x \in \mathbb{R}^n : \phi(x) > 0 \}, \\
    \partial \mathcal{C} &= \{ x \in \mathbb{R}^n : \phi(x) = 0 \}, \\
    \mathcal{C}^c &= \{ x \in \mathbb{R}^n : \phi(x) < 0 \}.
\end{align}
\end{definition}

\begin{assumption}
\label{asp:regular_safe_set}
    We assume all points on $\partial \gC$ are regular for the exterior $\gC^c$, meaning that the base process $X_t$ starting at any $x \in \partial \gC$ enters $\gC^c$ immediately with probability one.
\end{assumption}

\subsection{Long-Term Probabilities of Interest}

We study the long term safety and recovery probabilities of the delayed process $Y$, depending on whether the initial state $x$ lies inside the safe set $\gC$. If the system starts inside $\gC$, the safety probability is defined as 
\begin{equation}
\label{eq:safety_probability}
\spy(x,T):=\mathbb{P}\!\left(Y_t \in \gC,\ \forall t \in [0,T] \mid Y_0 = x \in \gC\right).
\end{equation}
If the system starts outside $\gC$, the recovery probability is defined as
\begin{equation}
\label{eq:recovery_probability}
\rpy(x,T):=\mathbb{P}\!\left(\exists t \in [0,T],\ Y_t \in \gC \mid Y_0 = x \notin \gC\right).
\end{equation}
To characterize these probabilities, define the first exit time from the
safe set and the first recovery time to the safe set 
\begin{equation}
\label{eq:exit_time}
\tau_{\gC} = \inf \{ t \ge 0 \mid Y_t \notin \gC \},
\end{equation}
\begin{equation}
\label{eq:recovery_time}
\rho_{\gC} = \inf \{ t \ge 0 \mid Y_t \in \gC \}.
\end{equation}
The safety and recovery probabilities can then be written as
\begin{equation}
\spy(x,T) = \mathbb{P}(\tau_{\gC} \geq T \mid Y_0 = x \in \gC),
\end{equation}
\begin{equation}
\rpy(x,T) = \mathbb{P}(\rho_{\gC} \leq T \mid Y_0 = x \notin \gC).
\end{equation}
Such long-term probabilities are essential for designing safe control methods to guarantee long-term safety of the system~\cite{wang2022myopically, wang2024myopically}, and are non-trivial to estimate because the computation scales with the horizon $T$ of interest as well as the number of initial states $x$ of interest~\cite{wang2025generalizable}. The research question for this paper is as follows.
\begin{itemize}
    \item When the stochastic system dynamics~\eqref{eq:x_trajectory} with jumps and time change~\eqref{eq:time_change} are given, what is the \textit{exact PDE characterization} of~\eqref{eq:safety_probability} and~\eqref{eq:recovery_probability} for different initial states $x$ and time horizons $T$?

    \item With the PDE characterization, how do we efficiently calculate their solutions to obtain the long-term probabilities~\eqref{eq:safety_probability} and~\eqref{eq:recovery_probability} of diverse system dynamics?
\end{itemize}

\section{Main Results}


In this section, we provide an analytical characterization of safety-critical events for stochastic systems with jumps and memory. Specifically, we derive fractional PDEs that directly characterize the long-term safety and recovery probabilities~\eqref{eq:safety_probability} and~\eqref{eq:recovery_probability}. This formulation provides a principled foundation for efficient computational methods, including numerical solvers and physics-informed learning approaches, as demonstrated in Section~\ref{sec:experiments}. Throughout the paper, we use upper-case letters (e.g., $X$) to denote random variables and lower-case letters (e.g., $x$) to denote their realizations.  

\begin{definition}[Infinitesimal Generator]
The infinitesimal generator $\mathcal A$ of
a stochastic process $\{ X_t  \in \R^n \}_{t \in \R_+}$ is
\begin{equation}
\label{eq:InfinitesimalGenerator}
\mathcal A\varphi(x) = \lim _{h\rightarrow 0} \frac{\mE_x\left[\varphi(X_{h})\right]-\varphi(x)}{h},
\end{equation}
whose domain $D(\mathcal{A})$ is the set of all functions $\varphi: \R^n \rightarrow \R$ such that the limit of \eqref{eq:InfinitesimalGenerator} exists for all $x \in \R^n$. 
\end{definition}
The infinitesimal generator of~\eqref{eq:x_trajectory} is then given by
\begin{equation}
\label{eq:space_fractional_generator}
    \mathcal{L}\varphi(x):=f(x)\cdot\nabla \varphi(x)+\sigma^{\alpha}(x)\,\mathcal{J}_{\alpha}\varphi(x),
\end{equation}
where $0<\alpha\leq2$ and the space-fractional jump operator $\mathcal{J}_{\alpha}$ is defined by its Fourier transform
\begin{equation}
    \mathcal{F}\{ \mathcal{J}_{\alpha}\varphi\}(\xi) = -\psi_\alpha(\xi)\,\mathcal{F}\{\varphi\}(\xi),
\end{equation} 
with characteristic exponent
\begin{equation}
    \psi_\alpha(\xi) = \int_{S^{n-1}}|\xi\cdot \theta|^{\alpha}\Big(1-i\,\tan\!\big(\tfrac{\pi\alpha}{2}\big)\,\mathrm{sign}(\xi\cdot \theta)\Big)\,\Lambda(d\theta).
\end{equation}
for $\alpha \in (0,2)\setminus\{1\}$, and
\begin{equation}
\psi_1(\xi)
=
\int_{S^{n-1}}
|\xi\cdot \theta|
\left(1 + i\,\frac{2}{\pi}\operatorname{sign}(\xi\cdot \theta)\log |\xi\cdot \theta|\right)
\,\Lambda(d\theta).
\end{equation}
The Caputo time-fractional derivative of a function $u(x,t)$ is defined as
\begin{equation}
\partial_t^{\beta} u(x,t)
:=
\frac{1}{\Gamma(1-\beta)}
\int_{0}^{t}
\frac{\partial_s u(x,s)}{(t-s)^{\beta}}\,ds,
\end{equation}
where $0<\beta<1$ and $\Gamma$ denotes the gamma function.

Let $\mathds{1}_{\gC}(x)$ denote the indicator function of $x$ being inside $\gC$. The main result is as follows. 

\begin{theorem}
\label{thm:safe_prob_pde}
Consider system~\eqref{eq:x_trajectory} and the fractional time change~\eqref{eq:time_change} with the initial state $Y_0 = x \in \gC$. Then, the long-term safety probability~\eqref{eq:safety_probability} is the solution to
\begin{equation}
\label{eq:safety_pde}
\begin{cases}
\partial_T^{\beta}\spy(x,T)=\mathcal{L}\spy(x,T), & x\in \operatorname{int}(\gC),\ T>0,\\
\spy(x,T)=0, & x\in \gC^{c}\cup \partial \gC ,\ T>0,\\
\spy(x,0)=\mathds{1}_{\gC}(x), & x\in \R^n .
\end{cases}
\end{equation}
\end{theorem}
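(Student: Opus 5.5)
The plan is to separate the spatial and temporal parts of the problem. First I would handle the base (memoryless) process $X$ killed on leaving $\gC$, and then transfer the result to $Y$ through the inverse subordinator $E_t$.

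\textbf{Step 1 (base process).} Let $\tau^X_{\gC}=\inf\{s\ge 0: X_s\notin\gC\}$ and define $u(x,s):=\mathbb{P}_x(\tau^X_{\gC}\ge s)$. This is the transition semigroup of $X$ killed upon exiting $\gC$, applied to $\mathds{1}_{\gC}$. I would show that $u$ solves the classical Cauchy problem
\[
\partial_s u=\mathcal{L}u \text{ in } \operatorname{int}(\gC),\qquad u=0 \text{ on } \gC^c\cup\partial\gC,\qquad u(\cdot,0)=\mathds{1}_{\gC}.
\]
The argument goes through the strong Markov property and Dynkin's formula. Conditioning on $\mathcal{F}_h$ gives $u(x,s+h)=\mE_x[\mathds{1}_{\{\tau^X_{\gC}>h\}}u(X_h,s)]$, and letting $h\to 0$ together with the generator \eqref{eq:space_fractional_generator} yields the equation in $\operatorname{int}(\gC)$. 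Because $\mathcal{L}$ is nonlocal, the boundary condition has to be imposed on all of $\gC^c$, not only on $\partial\gC$. On $\gC^c$ it holds trivially. On $\partial\gC$ it follows from Assumption~\ref{asp:regular_safe_set}: regularity gives $\tau^X_{\gC}=0$ a.s., so $u=0$ there for $s>0$. Assumption~\ref{asp:dynamics_regularity} supplies the well-posedness and the Feller property needed for the generator computation.

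\textbf{Step 2 (time change).} Since $E$ is continuous and nondecreasing with $E_0=0$, and $\{E_r:r\le T\}=[0,E_T]$, the path of $Y$ on $[0,T]$ is exactly the path of $X$ on $[0,E_T]$. Hence $\{Y_t\in\gC,\ \forall t\le T\}=\{\tau^X_{\gC}\ge E_T\}$, up to the convention at the boundary. By independence of $D$ and $X$,
\[
\spy(x,T)=\int_0^\infty u(x,s)\,g_T(s)\,ds ,
\]
where $g_T$ is the density of $E_T$. For this density one has the known Laplace transform in $T$, $\int_0^\infty e^{-\lambda T}g_T(s)\,dT=\lambda^{\beta-1}e^{-s\lambda^\beta}$.

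\textbf{Step 3 (subordination identity).} Write $\tilde{\spy}(x,\lambda)$ for the Laplace transform in $T$. Then $\tilde{\spy}(x,\lambda)=\lambda^{\beta-1}\hat u(x,\lambda^\beta)$, where $\hat u$ is the Laplace transform of $u$ in $s$. Step 1 gives $\mu\hat u-\mathds{1}_{\gC}=\mathcal{L}\hat u$ on $\operatorname{int}(\gC)$, with $\hat u=0$ outside. Substituting $\mu=\lambda^\beta$ gives
\[
\lambda^\beta\tilde{\spy}-\lambda^{\beta-1}\mathds{1}_{\gC}=\mathcal{L}\tilde{\spy}.
\]
The left-hand side is exactly the Laplace transform of the Caputo derivative $\partial_T^\beta\spy$. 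Inverting the transform yields \eqref{eq:safety_pde}. The boundary and initial conditions pass through the integral in Step 2 directly: $E_0=0$ gives the initial condition, and the vanishing of $u$ on $\gC^c\cup\partial\gC$ gives the boundary condition.

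\textbf{Main obstacle.} I expect the hardest part to be justifying the interchanges. One must commute $\mathcal{L}$, which is a nonlocal singular integral, with the $s$-integral and with the Laplace transform. One also needs enough regularity of $u$ in $x$ up to $\partial\gC$, where $u$ is typically not $C^2$ and has only boundary behaviour of order $\alpha/2$ type. My first approach would be to interpret the equation in the mild/semigroup sense: work with the resolvent of the killed Feller semigroup, so that $\hat u(\cdot,\mu)=(\mu-\mathcal{L}_{\gC})^{-1}\mathds{1}_{\gC}$. I would then invoke the Baeumer–Meerschaert subordination theorem for Caputo Cauchy problems, and only afterwards upgrade the result to a pointwise statement in $\operatorname{int}(\gC)$, using interior regularity. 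Bounds $|u|\le 1$ and $\mE[e^{-\mu E_T}]<\infty$ would provide the dominated-convergence estimates.
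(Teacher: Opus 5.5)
Your proposal is correct and follows the paper's proof essentially step for step: the base-process Cauchy problem (the paper's Lemma~\ref{lm:safe_prob_space_pde}), the path identity $\{\tau_{\gC}\ge T\}=\{\tau^X_{\gC}\ge E_T\}$ with independence giving $\spy(x,T)=\mE[\spx(x,E_T)]$, and the Laplace-transform subordination identity $\tilde\spy(x,p)=p^{\beta-1}\tilde\spx(x,p^\beta)$ matched against the Laplace transform of the Caputo derivative. The only difference is how the interchange of $\mathcal{L}$ with the Laplace transform is justified: the paper imposes Assumption~\ref{asp:continuity_boundedness} and uses closedness of the Feller generator (Hille's theorem), while your killed-semigroup resolvent and Baeumer--Meerschaert route is more careful, and the same assumption could be used to shortcut it.
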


\begin{theorem}
\label{thm:rec_prob_pde}
Consider system~\eqref{eq:x_trajectory} and the fractional time change~\eqref{eq:time_change} with the initial state $Y_0 = x \notin \gC$. Then, the long-term recovery probability~\eqref{eq:recovery_probability} is the solution to
\begin{equation}
\label{eq:recovery_pde}
\begin{cases}
\partial_T^{\beta}\rpy(x,T)=\mathcal{L}\rpy(x,T), & x\in \gC^c,\ T>0,\\
\rpy(x,T)=1, & x\in \gC ,\ T>0,\\
\rpy(x,0)=\mathds{1}_{\gC}(x), & x\in \R^n.
\end{cases}
\end{equation}
\end{theorem}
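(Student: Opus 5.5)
The plan is to prove Theorem~\ref{thm:rec_prob_pde} by reducing the time-changed process $Y$ to the base process $X$ through the independence of $E$ and $X$. The reduction has three steps: first characterize the base (memoryless) recovery probability by a classical, integer-order PDE, then subordinate in time, and finally show that subordination turns $\partial_T$ into the Caputo derivative $\partial_T^\beta$.

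\textbf{Step 1: base process.} Define the recovery time of the base process, $\rho^X := \inf\{s\ge 0: X_s\in\gC\}$, and set $u(x,s):=\mathbb{P}(\rho^X\le s\mid X_0=x)$. This is an absorption probability for the Feller process $X$, whose generator is $\mathcal{L}$ in~\eqref{eq:space_fractional_generator}. It can be realized as $u(x,s)=\mE_x[\mathds{1}_{\gC}(X^\dagger_s)]$, where $X^\dagger$ is $X$ stopped on hitting $\gC$. By Dynkin's formula and the strong Markov property, $u$ solves
\[
\partial_s u=\mathcal{L}u \ \text{ on } \gC^c, \qquad u=1 \ \text{ on } \gC, \qquad u(\cdot,0)=\mathds{1}_{\gC}.
\]
The nonlocal operator $\mathcal{L}$ needs $u$ on all of $\R^n$, which is why the exterior condition is imposed on the whole set $\gC$. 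Assumption~\ref{asp:regular_safe_set}, applied to the closed set $\gC$, is what guarantees continuity up to the boundary and that $u=1$ on $\gC$ holds in the classical sense. In the semigroup picture, $u(\cdot,s)=P^\dagger_s\mathds{1}_{\gC}$ with $P^\dagger$ the semigroup of $X^\dagger$, and its generator $\mathcal{L}^\dagger$ acts as $\mathcal{L}$ on $\gC^c$.

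\textbf{Step 2: time change.} Because $E_t$ is continuous and nondecreasing with $E_0=0$, $Y_t=X_{E_t}$ enters $\gC$ by time $T$ if and only if $X$ enters $\gC$ by time $E_T$. Continuity of $E$ ensures no intermediate times of $X$ are skipped. Hence $\rho_{\gC}\le T \iff \rho^X\le E_T$. Conditioning on $E$, which is independent of $X$, gives
\[
\rpy(x,T)=\int_0^\infty u(x,s)\, h_T(s)\,ds,
\]
where $h_T$ is the density of $E_T$.

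\textbf{Step 3: fractional Cauchy problem.} It is classical (Baeumer--Meerschaert; Meerschaert--Scheffler) that the density $h_T$ of the inverse $\beta$-stable subordinator has Laplace transform in $T$ equal to $\lambda^{\beta-1}e^{-s\lambda^\beta}$. For each $x$, take the Laplace transform in $T$:
\[
\tilde{\rpy}(x,\lambda)=\lambda^{\beta-1}\,\tilde u(x,\lambda^\beta)=\lambda^{\beta-1}(\lambda^\beta-\mathcal{L}^\dagger)^{-1}\mathds{1}_{\gC}.
\]
Rearranging gives $\lambda^\beta\tilde{\rpy}-\lambda^{\beta-1}\mathds{1}_{\gC}=\mathcal{L}^\dagger\tilde{\rpy}$, which is exactly the Laplace transform of $\partial_T^\beta\rpy=\mathcal{L}^\dagger\rpy$ with $\rpy(\cdot,0)=\mathds{1}_{\gC}$. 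Inverting the transform yields the PDE on $\gC^c$. The exterior condition transfers directly: for $x\in\gC$ we have $u(x,s)=1$ for all $s$, so $\rpy(x,T)=\int h_T=1$. The initial condition follows from $E_0=0$.

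\textbf{Main obstacle.} The hardest part is the rigorous justification for the killed/stopped nonlocal problem: identifying $u$ as a classical or mild solution in the domain of $\mathcal{L}^\dagger$, and justifying the interchange of $\mathcal{L}$ with the integral against $h_T$. The initial datum $\mathds{1}_{\gC}$ is discontinuous, so one cannot assume $u$ is smooth near $s=0$. My first approach would be to work at the level of strongly continuous semigroups on bounded measurable functions. I would use the subordination formula for semigroups, $\int_0^\infty P^\dagger_s\,h_T(s)\,ds$, which is known to solve the fractional Cauchy problem in the mild sense. I would then upgrade to the pointwise statement in $\gC^c$ using boundedness and Lipschitz continuity of $f$ and $\sigma$ (Assumption~\ref{asp:dynamics_regularity}) together with boundary regularity (Assumption~\ref{asp:regular_safe_set}).
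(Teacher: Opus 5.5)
Your argument is correct at the paper's level of rigor, but it takes a different route from the paper.

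\textbf{The paper's route.} The paper does not treat the hitting problem directly. It applies Theorem~\ref{thm:safe_prob_pde} to the complementary safe set $\gC^c\cup\partial\gC=\{\phi\le 0\}$ and sets $\rpy=1-\spy$ for that set. It then transfers the equation on $\gC^c$ using linearity and the fact that $\partial_T^\beta$ and $\mathcal L$ both annihilate constants, and reads the exterior and initial conditions off the definition.

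\textbf{Your route.} You rebuild the pipeline of Lemma~\ref{lm:safe_prob_space_pde} and Theorem~\ref{thm:safe_prob_pde} directly for the hitting time of the closed set $\gC$. The ingredients are:
\begin{itemize}
\item the stopped semigroup $u(\cdot,s)=P^\dagger_s\mathds{1}_{\gC}$, with $\mathcal L^\dagger$ acting as $\mathcal L$ on the open set $\gC^c$;
\item the pathwise identity $\{\rho_{\gC}\le T\}=\{\rho^X\le E_T\}$;
\item the subordination formula $\tilde\rpy(x,\lambda)=\lambda^{\beta-1}(\lambda^\beta-\mathcal L^\dagger)^{-1}\mathds{1}_{\gC}$.
\end{itemize}

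\textbf{What each buys.} The paper's reduction is much shorter and reuses Assumption~\ref{asp:continuity_boundedness} and the Laplace argument wholesale. However, it tacitly needs hypotheses it never states:
\begin{itemize}
\item Applying Theorem~\ref{thm:safe_prob_pde} to $\{\phi\le 0\}$ requires $\partial\gC$ to be regular for $\operatorname{int}(\gC)$, the mirror image of Assumption~\ref{asp:regular_safe_set}. The same condition is what identifies $1-\spy$ (the probability of reaching $\operatorname{int}(\gC)$) with the probability of reaching $\gC$.
\item The step $\mathbb P(\tau\le T)=1-\mathbb P(\tau\ge T)$ needs $\mathbb P(\tau=T)=0$.
\end{itemize}
Your route avoids this bookkeeping. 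The conditions $\rpy=1$ on $\gC$ and $\rpy(\cdot,0)=\mathds{1}_{\gC}$, boundary included, fall out exactly from $\rho^X=0$ when $X_0\in\gC$ and from $E_0=0$. The price is that you need an analogue of Assumption~\ref{asp:continuity_boundedness} for $u$ to commute $\mathcal L^\dagger$ with the Laplace transform, which you correctly flag.

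\textbf{One correction.} Assumption~\ref{asp:regular_safe_set} concerns regularity of $\partial\gC$ for $\gC^c$. It is therefore not what gives $u=1$ on $\gC$; that needs nothing. Nor does it give continuity of $u$ from the exterior side; that would require $\partial\gC$ to be regular for $\gC$. The corresponding sentence in your Step~1 should invoke that condition instead.
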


Theorem~\ref{thm:safe_prob_pde} and Theorem~\ref{thm:rec_prob_pde} show that the long-term safety and recovery probability of the stochastic system~\eqref{eq:time_change} with jumps and memory are solutions of deterministic fractional PDEs. This characterization enables efficient estimation of such long-term risk values for diverse initial states, time horizons and system dynamics, as demonstrated in Section~\ref{sec:experiments}.

\section{Proofs}

In this section, we present the proofs of Theorems~\ref{thm:safe_prob_pde} and~\ref{thm:rec_prob_pde}. We begin by deriving a space-fractional PDE for the safety probability without time-change effects in Lemma~\ref{lm:safe_prob_space_pde}. We then incorporate the time-change mechanism and show that it induces a time-fractional structure in the governing equation, which leads to the proofs of Theorems~\ref{thm:safe_prob_pde} and~\ref{thm:rec_prob_pde}.

\begin{lemma}
\label{lm:safe_prob_space_pde}
Consider system~\eqref{eq:x_trajectory} with the initial state $X_0 = x$. Then, the long-term safety probability $\spx(x,T)
:=
\mathbb{P}\!\left(
X_t \in \gC,\ \forall t \in [0,T] \mid X_0 = x
\right)$ is the solution to
\begin{equation}
\label{eq:space_fractional_pde_x}
\begin{cases}
\partial_T \spx(x,T)=\mathcal{L}\spx(x,T), & x\in \operatorname{int}(\gC),\ T>0,\\
\spx(x,T)=0, & x\in \gC^{c}\cup\partial \gC,\ T>0,\\
\spx(x,0)=\mathds{1}_{\gC}(x), & x\in \R^n.\\
\end{cases}  
\end{equation}
\end{lemma}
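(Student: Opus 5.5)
We will prove Lemma~\ref{lm:safe_prob_space_pde} by the standard killed-process (Feynman--Kac/Dynkin) argument. Let $\tau := \inf\{t\ge 0 : X_t \notin \operatorname{int}(\gC)\}$. By Assumption~\ref{asp:regular_safe_set}, the events $\{X_t\in\gC,\ \forall t\in[0,T]\}$ and $\{\tau > T\}$ coincide up to a null set. Hence $\spx(x,T)=\mathbb{P}_x(\tau>T)$. The boundary and initial conditions then follow directly. If $x\in\gC^c\cup\partial\gC$, then either $X_0\notin\gC$ or the process enters $\gC^c$ immediately, so $\tau=0$ a.s. and $\spx(x,T)=0$ for $T>0$. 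At $T=0$, the event reduces to $X_0\in\gC$, giving $\spx(x,0)=\mathds{1}_{\gC}(x)$.

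For the evolution equation we use the Markov property at a small time $h$. For $x\in\operatorname{int}(\gC)$ and $T>0$,
\begin{equation*}
\spx(x,T+h)=\mE_x\big[\mathds{1}_{\{\tau>h\}}\,\spx(X_h,T)\big].
\end{equation*}
Since $\spx(\cdot,T)$ vanishes off $\operatorname{int}(\gC)$, we have $\mathds{1}_{\{\tau>h\}}\spx(X_h,T)=\spx(X_{h\wedge\tau},T)$ whenever the exit lands in $\gC^c\cup\partial\gC$. By quasi-left-continuity and the regularity assumption, this holds a.s. Therefore
\begin{equation*}
\frac{\spx(x,T+h)-\spx(x,T)}{h}=\frac{\mE_x[\spx(X_{h\wedge\tau},T)]-\spx(x,T)}{h}.
\end{equation*}
Applying Dynkin's formula to $\varphi=\spx(\cdot,T)$, the right-hand side equals $h^{-1}\mE_x\int_0^{h\wedge\tau}\mathcal{L}\varphi(X_s)\,ds$. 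Because $x$ is interior and paths are right-continuous, $h\wedge\tau = h+o(h)$ in expectation, and $s\mapsto \mathcal{L}\varphi(X_s)$ is right-continuous at $0$. Letting $h\downarrow 0$ yields $\partial_T\spx=\mathcal{L}\spx$.

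The operator $\mathcal{L}$ in \eqref{eq:space_fractional_generator} is the generator of \eqref{eq:x_trajectory}. Under Assumption~\ref{asp:dynamics_regularity}, this is obtained from the Lévy--Khintchine representation of $L^{(\alpha,\Lambda)}$ together with Itô's formula for Lévy-driven SDEs~\cite{applebaum2009levy}. Note that $\mathcal{L}$ is nonlocal: evaluating $\mathcal{L}\spx(x,T)$ uses the values of $\spx$ on all of $\R^n$. This is why the exterior condition must be imposed on the whole of $\gC^c\cup\partial\gC$ and not only on $\partial\gC$. The argument handles this automatically, since jumps out of $\gC$ land where $\spx=0$.

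The main obstacle is regularity. Dynkin's formula needs $\spx(\cdot,T)\in D(\mathcal{L})$, or at least enough smoothness, but $\spx$ is generally discontinuous across $\partial\gC$ and only Hölder near it. We plan to handle this in two steps. First, take a mollified or truncated test function that agrees with $\spx(\cdot,T)$ on a neighborhood of $x$ and is $C^2$ with bounded derivatives there. Second, treat the far-field part through the integral (jump) form of $\mathcal{J}_\alpha$, which is well defined for bounded functions that are $C^2$ near $x$. Interior regularity in $x$ can be cited from parabolic theory for nonlocal operators. Failing that, we interpret the PDE in the viscosity or weak sense, for which the semigroup identity above suffices. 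Uniqueness of the solution, which is needed to say $\spx$ is ``the'' solution, would follow from a maximum principle for the killed semigroup.
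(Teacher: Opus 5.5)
Your proposal is correct and follows essentially the same argument as the paper: the Markov property at a small time $h$, Dynkin's formula for the process stopped at the exit time, the vanishing of $\spx$ off $\operatorname{int}(\gC)$ to eliminate the boundary term, and $h\downarrow 0$, with the boundary and initial conditions taken from Assumption~\ref{asp:regular_safe_set}. The differences are minor. You compare $\spx(x,T+h)$ with $\spx(x,T)$ and apply the time-independent Dynkin formula to $\spx(\cdot,T)$, which gives the right $T$-derivative, where the paper uses the time-dependent formula for $\Phi(s,x)=\spx(x,T-s)$; you also make explicit the domain-of-$\mathcal{L}$ regularity issue that the paper leaves implicit here and later simply assumes in Assumption~\ref{asp:continuity_boundedness}.
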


\begin{proof}[Proof (Lemma~\ref{lm:safe_prob_space_pde})]
Let $\tau^X_{\gC}:=\inf\{t\ge0: X_t\notin \gC\}$ denote the first exit time of $X$ from $\gC$.
By definition,
\begin{equation}
\spx(x,T)=\mathbb P(\tau^X_{\gC}\geq T \mid X_0 = x).
\end{equation}
For any $0<h<T$, using the Markov property of $X_t$,
\begin{equation}
\spx(x,T)
=
\mathbb E
\left[
\mathds{1}_{\{\tau^X_{\gC}>h\}}
\spx(X_h,T-h) \mid X_0 = x
\right],
\label{eq:dp_safe_prob}
\end{equation}
where $\mathds{1}_{\{\tau^X_{\gC}>h\}}$ is an indicator function of the event $\tau^X_{\gC}>h$. Define $\Phi(s,x):=\spx(x,T-s)$ for $0\le s\le h$. Applying the time-dependent Dynkin's formula~\cite{oksendal2007applied} to the stopped process $X_{s\wedge\tau^X_{\gC}}$ yields
\begin{equation}
\label{eq:dynkin_identity}
\begin{aligned}
    \mathbb E\! & \left[  \Phi(h\wedge\tau^X_{\gC},X_{h\wedge\tau^X_{\gC}}) \mid X_0 = x\right]  - \Phi(0,x) = \\
& \mathbb E \left[\int_0^{h\wedge\tau^X_{\gC}}\left(\partial_s \Phi(s,X_s)+\mathcal L\Phi(s,X_s)\right)ds \mid X_0 = x\right],
\end{aligned}
\end{equation}
where $\wedge$ is the minimum operator such that $h\wedge\tau^X_{\gC} := \min\{h, \tau^X_{\gC}\}$. Since $\spx(x,\cdot)=0$ for $x\in \gC^c\cup\partial\gC$, we have
\begin{equation}
\Phi(h\wedge\tau^X_{\gC},X_{h\wedge\tau^X_{\gC}})
=
\mathds{1}_{\{\tau^X_{\gC}>h\}}
\spx(X_h,T-h).
\end{equation}
From~\eqref{eq:dp_safe_prob}, we know that the left-hand side of~\eqref{eq:dynkin_identity} equals zero, which implies
\begin{equation}
\label{eq:safe_pde_integral}
\mathbb E\left[\int_0^{h\wedge\tau^X_{\gC}}(-\partial_T \spx+\mathcal L \spx )(X_s,T-s)ds \mid X_0 = x\right]=0.
\end{equation}
Dividing~\eqref{eq:safe_pde_integral} by $h$ and letting $h\rightarrow 0$ yields
\begin{equation}
\partial_T \spx(x,T)=\mathcal L \spx(x,T), \qquad x\in\operatorname{int}(\gC), T>0.
\end{equation}
The boundary condition and the initial condition in~\eqref{eq:space_fractional_pde_x} follow from Assumption~\ref{asp:regular_safe_set} and the definitions of the exit time and the long-term safety probability. 
\end{proof}

\begin{lemma}
(Independence Lemma~\cite[Lemma 2.3.4]{shreve2004stochastic})\label{lm:independence}
Let $(\Omega,\mathcal{F},\mathbb{P})$ be a probability space and let 
$\mathcal{G}\subseteq\mathcal{F}$ be a sub-$\sigma$-algebra. Let $Z,W$ be random variables such that $Z$ is $\mathcal{G}$-measurable and $W$ is independent of $\mathcal{G}$. Then for any Borel measurable function $f:\mathbb{R}^m\times\mathbb{R}^n\to\mathbb{R}$ such that $f(Z,W)$ is integrable,
\begin{equation}
\mathbb{E}\!\left[f(Z,W)\mid \mathcal{G}\right]=g(Z)\quad\text{a.s.},
\end{equation}
where $g(z):=\mathbb{E}\!\left[f(z,W)\right]$.
\end{lemma}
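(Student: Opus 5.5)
Since $Z$ is $\mathcal{G}$-measurable and $W$ is independent of $\mathcal{G}$, I will verify the defining property of conditional expectation for $g(Z)$. Two things must be checked: that $g(Z)$ is $\mathcal{G}$-measurable, and that $\mathbb{E}[\mathds{1}_A f(Z,W)]=\mathbb{E}[\mathds{1}_A g(Z)]$ for every $A\in\mathcal{G}$. The method is the standard-machine (monotone class) argument: prove the identity for product indicators $f(z,w)=\mathds{1}_{B}(z)\mathds{1}_{D}(w)$, then extend to all bounded Borel $f$, and finally to integrable $f(Z,W)$.

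\textbf{Step 1 (product indicators).} Take $f=\mathds{1}_B\otimes\mathds{1}_D$ with $B,D$ Borel. Then $g(z)=\mathds{1}_B(z)\,\mathbb{P}(W\in D)$, which is Borel, so $g(Z)$ is $\mathcal{G}$-measurable. For $A\in\mathcal{G}$, the event $A\cap\{Z\in B\}$ lies in $\mathcal{G}$ and is therefore independent of $\{W\in D\}$. Hence
\[
\mathbb{E}[\mathds{1}_A\mathds{1}_B(Z)\mathds{1}_D(W)]=\mathbb{P}(A\cap\{Z\in B\})\,\mathbb{P}(W\in D)=\mathbb{E}[\mathds{1}_A g(Z)].
\]

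\textbf{Step 2 (all bounded Borel $f$).} Let $\mathcal{H}$ be the class of bounded Borel $f$ for which $g$ is Borel and the identity holds for all $A\in\mathcal{G}$. By linearity, $\mathcal{H}$ is a vector space, and it contains constants. By monotone convergence, applied both to $g(z)=\int f(z,w)\,\mathbb{P}_W(dw)$ and to the two expectations, $\mathcal{H}$ is closed under bounded increasing limits. Products $B\times D$ form a $\pi$-system generating the Borel $\sigma$-algebra of $\mathbb{R}^m\times\mathbb{R}^n$. The functional monotone class theorem therefore puts every bounded Borel $f$ in $\mathcal{H}$. Measurability of $g$ comes along automatically, which is the Fubini-type point here.

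\textbf{Step 3 (integrable $f(Z,W)$).} For nonnegative $f$, apply Step 2 to the truncations $f\wedge k$ and let $k\to\infty$ using monotone convergence. For general $f$, split $f=f^+-f^-$. Integrability of $f(Z,W)$ means
\[
\mathbb{E}[g^{\pm}(Z)]=\mathbb{E}[f^{\pm}(Z,W)]<\infty,
\]
where $g^{\pm}(z):=\mathbb{E}[f^{\pm}(z,W)]$ and the equality is the nonnegative case with $A=\Omega$. So $g^{\pm}(Z)<\infty$ almost surely, and $g=g^+-g^-$ is well defined off a null set. This null set is the only delicate point: on it one sets $g$ to, say, zero, which does not affect the almost-sure statement. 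The main obstacle is thus this bookkeeping in Step 3 together with the monotone class extension; once Step 1 is in place, the rest is routine measure theory.
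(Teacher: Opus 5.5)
Your proof is correct and complete. The paper gives no proof of this lemma; it simply imports it by citation to Shreve. So there is no in-paper argument to compare against, and your monotone-class (``standard machine'') proof is the standard way to supply one.

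The details are right:
\begin{itemize}
\item Step~1 uses exactly the independence that is needed, namely $A\cap\{Z\in B\}\in\mathcal G$ against $\{W\in D\}$.
\item Step~2 builds Borel measurability of $g$ into the class $\mathcal H$ instead of assuming it, so the Fubini-type point is handled.
\item Step~3 correctly recognizes that $g(z)=\mathbb E[f(z,W)]$ is only guaranteed to exist for $\mathbb P_Z$-a.e.\ $z$, which the statement as written glosses over.
\end{itemize}
When you write it up, record that the exceptional set $N=\{g^+=\infty\}\cup\{g^-=\infty\}$ is Borel, because $g^\pm$ are Borel and $[0,\infty]$-valued. Redefining $g$ on $N$ therefore keeps it Borel. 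Also spell out the final line,
\[
\mathbb E[\mathds 1_A f(Z,W)]=\mathbb E[\mathds 1_A g^+(Z)]-\mathbb E[\mathds 1_A g^-(Z)]=\mathbb E[\mathds 1_A g(Z)],
\]
with $g(Z)$ integrable since $|g|\le g^++g^-$.

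Your route also buys something the statement as cited does not. Steps~1--2 never use that the state spaces are Euclidean, because measurable rectangles generate any product $\sigma$-algebra. The argument therefore applies verbatim to $[0,\infty]$-valued $W$. That is exactly how the lemma is used in the proof of Theorem~\ref{thm:safe_prob_pde}: there $\mathcal G=\sigma(E_T)$, $Z=E_T$, $W=\tau^X_{\gC}$ (which may equal $+\infty$), and $f(z,w)=\mathds 1_{\{w\ge z\}}$. Since this $f$ is bounded, your Step~2 alone already covers the paper's application.
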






\begin{assumption}
\label{asp:continuity_boundedness}
For every $T>0$, the long-term safety probability $\spx(x,T):=\mathbb{P}\!\left(X_t \in \gC,\ \forall t \in [0,T] \mid X_0 = x\right)$ of process~\eqref{eq:x_trajectory} is continuous in $x$, so that $\spx(\cdot,T)\in C_0(\mathbb R^n)$. Furthermore, for almost every $T>0$, $\spx$ is in the domain of the infinitesimal generator $\mathcal L$, i.e., $\spx(x,T) \in D(\mathcal{L})$, and for any $\lambda > 0$, 
\begin{equation}
    \int_0^\infty e^{-\lambda T}\,\|\mathcal L \spx(\cdot,T)\|_{C_0(\mathbb R^n)}\,dT<\infty.
\end{equation}
\end{assumption}

Assumption~\ref{asp:continuity_boundedness} imposes basic continuity and integrability conditions on the long-term safety probability associated with the base process~\eqref{eq:x_trajectory}. These conditions ensure that the Laplace transforms of both $\spx$ and $\mathcal{L}\spx$ are well defined.

\subsection{Proof of Theorem~\ref{thm:safe_prob_pde}}

\begin{proof}[Proof (Theorem~\ref{thm:safe_prob_pde})]
By definition, $\tau_{\gC}\geq T$ if and only if \begin{equation}
\label{eq:exit_time_equiv_Y}
    Y_s\in \gC, \;  \forall s \le T.
\end{equation}
Since $Y_s=X_{E_s}$,~\eqref{eq:exit_time_equiv_Y} is equivalent to
\begin{equation}
\label{eq:exit_time_equiv_X}
X_{E_s}\in \gC, \; \forall s\le T.
\end{equation}
Since the $\beta$-stable subordinator $(D_u)_{u \ge 0}$ is strictly increasing~\cite{meerschaert2019inverse}, $E_T$ is continuous and monotone, and we have 
\begin{equation}
    \{E_s:0\le s\le T\}=[0,E_T].
\end{equation} 
Thus,~\eqref{eq:exit_time_equiv_X} is equivalent to
$
X_r\in \gC, \forall r\le E_T$,
which is also equivalent to
$
\tau^X_{\gC}\geq E_T$.
Therefore,
\begin{equation}
\{\tau_{\gC}\geq T\}=\{\tau^X_{\gC}\geq E_T\}.
\end{equation}
Hence,
\begin{equation}
\spy(x,T)=\mathbb{P}(\tau^X_{\gC}\geq E_T \mid X_0 = x).
\end{equation}
Since $E_T$ is independent of $X$, and hence, independent of $\tau^X_{\gC}$, by Lemma~\ref{lm:independence},
\begin{equation}
\label{eq:repre_combine}
\begin{aligned}
\spy(x,T) = & \; \mathbb{P}(\tau^X_{\gC}\geq E_T\mid X_0 = x) \\ 
= & \; \mathbb{E}\left[\mathbb{P}(\tau^X_{\gC}\geq E_T\mid E_T, X_0 = x)\right] \\
= & \;\mathbb{E}\left[\spx(x,E_T)\right].
\end{aligned}
\end{equation}
From Lemma~\ref{lm:safe_prob_space_pde}, we have $\spx(x,s)$ satisfies the PDE
\begin{equation}
\label{eq:safe_prob_pde_space}
\partial_s \spx(x,s)=\mathcal{L}Q(x,s),
\end{equation}
for $x\in \operatorname{int}(\gC)$ and $s>0$, with initial and boundary conditions
\begin{equation}
\begin{aligned}
\spx(x,0)=\mathds{1}_{\gC}(x), & \qquad x\in \R^n, \\
\spx(x,s)=0, & \qquad x\in \gC^{c}\cup\partial \gC,\ s>0.
\end{aligned}
\end{equation}
From~\cite{meerschaert2013inverse}, the Laplace transform of the density $h(s,T)$ of $E_T$ in $T$ satisfies
\begin{equation}
\tilde h(s,p) := \int_0^\infty e^{-pT}h(s,T)dT=p^{\beta-1}e^{-sp^\beta},\quad \forall p>0.
\end{equation}
Therefore
\begin{equation}
\label{eq:lap_combine}
\int_0^\infty e^{-pT}\mathbb{P}(E_T\in ds)dT=p^{\beta-1}e^{-sp^\beta}ds,\quad \forall p>0.
\end{equation}
From~\eqref{eq:repre_combine},
\begin{equation}
\label{eq:u_v_relationship}
\spy(x,T)=\int_0^\infty \spx(x,s)\mathbb{P}(E_T\in ds).
\end{equation}
Taking the Laplace transform of~\eqref{eq:u_v_relationship} in $T$ gives
\begin{equation}
\begin{aligned}
    \tilde \spy(x,p)& :=\int_{0}^{\infty} e^{-pT}\spy(x,T)\,dT \\
    & =\int_0^\infty \spx(x,s)\left(\int_0^\infty e^{-pT}\mathbb{P}(E_T\in ds)dT\right).
\end{aligned}
\end{equation}
Substituting~\eqref{eq:lap_combine} yields
\begin{equation}
\tilde \spy(x,p)=\int_0^\infty \spx(x,s)p^{\beta-1}e^{-sp^\beta}ds=p^{\beta-1}\tilde \spx(x,p^\beta).
\label{tildeuv}
\end{equation}
Taking the Laplace transform of the PDE~\eqref{eq:safe_prob_pde_space} for $\spx$ with respect to $s$ gives
\begin{equation}
\int_0^\infty e^{-\lambda s}\partial_s \spx(x,s)ds=\lambda\tilde \spx(x,\lambda)-\spx(x,0),\quad \forall \lambda>0.
\end{equation}
Under Assumption~\ref{asp:dynamics_regularity} and~\ref{asp:continuity_boundedness}, the associated semigroup to \eqref{eq:x_trajectory} is Feller~\cite{applebaum2009levy} with closed generator $\mathcal{L}$~\cite{pazy2012semigroups}. Therefore, $\mathcal L$ commutes with the Laplace transform~\cite[Theorem 1.19]{van2008stochastic}, yielding
\begin{equation}
\int_0^\infty e^{-\lambda s}\mathcal L \spx(x,s)ds=\mathcal L\tilde \spx(x,\lambda),\quad \forall \lambda>0.
\end{equation}
Therefore
\begin{equation}
\lambda\tilde \spx(x,\lambda)-\spx(x,0)=\mathcal L\tilde \spx(x,\lambda),\quad \forall \lambda>0.
\end{equation}
Since $\spx(x,0)=1$ when $x\in \gC$,
\begin{equation}
(\lambda-\mathcal L)\tilde \spx(x,\lambda)=1,\quad \forall \lambda>0, x\in \gC.
\label{trans_combine}
\end{equation}
Setting $\lambda=p^\beta$ and using \eqref{tildeuv} gives
\begin{equation}
(p^\beta-\mathcal L)\frac{\tilde \spy(x,p)}{p^{\beta-1}}=1,\quad \forall p>0,
\end{equation}
which implies
\begin{equation}
p^\beta\tilde \spy(x,p)-p^{\beta-1}=\mathcal L\tilde \spy(x,p),\quad \forall p>0.
\label{gen_lap}
\end{equation}
The Laplace transform of the Caputo derivative satisfies
\begin{equation}
\widetilde{\{\partial_T^\beta \spy\}}(p)=p^\beta\tilde \spy(x,p)-p^{\beta-1}\spy(x,0),\quad p>0.
\end{equation}
Since $\spy(x,0)=\mathbb{P}(\tau_{\gC}\geq 0 \mid Y_0 = x \in \gC)=1$,
\begin{equation}
\widetilde{\{\partial_T^\beta \spy\}}(p)=p^\beta\tilde \spy(x,p)-p^{\beta-1},\quad p>0.
\end{equation}
Comparing with \eqref{gen_lap} yields
\begin{equation}
\widetilde{\{\partial_T^\beta \spy\}}(p)=\mathcal L\tilde \spy(x,p)=\widetilde{\mathcal{L}F}(x,p),\quad \forall p>0.
\end{equation}
The uniqueness of Laplace transform implies 
\begin{equation}
\partial_T^\beta \spy(x,T)=\mathcal L \spy(x,T),\qquad x\in \operatorname{int}(\gC),\ T>0.
\end{equation}
The initial and boundary conditions remain identical to those established in Lemma \ref{lm:safe_prob_space_pde}. This spatial invariance holds due to the nature of the subordinated process $Y_t = X_{E_t}$. Since the inverse stable subordinator $E_t$ is a continuous and non-decreasing operational time process with $E_0 = 0$, the starting position is strictly preserved ($Y_0 = X_0$). Furthermore, because the time-change purely stretches or compresses time without altering the underlying geometric path of the base process, any point $x \in \partial \gC$ that is regular for the base process $X_t$ remains regular for the time-changed process $Y_t$. Thus, the process satisfies the same zero-boundary conditions upon exiting the safe region.
\end{proof}

\subsection{Proof of Theorem~\ref{thm:rec_prob_pde}}

\begin{proof}[Proof (Theorem~\ref{thm:rec_prob_pde})]
Since both $\gC^c$ and $\operatorname{int}(\gC)$ are open sets, let us consider a new safe set $\gC^c \cup \partial \gC$ as opposed to $\gC$. By Theorem~\ref{thm:safe_prob_pde}, the corresponding safety probability~\eqref{eq:safety_probability} will satisfy the following PDE
\begin{equation}
\label{eq:safe_prob_pde_reversed}
    \partial_T^{\beta}\spy(x,T)=\mathcal{L}\spy(x,T), \qquad x\in \gC^c,\ T>0.
\end{equation}
Since exiting $\gC^c$ is equivalent to returning to $\gC$, we have
\begin{equation}
\label{eq:safe_rec_prob_relation}
\begin{aligned}
\rpy(x,T) & := \mathbb{P}(\rho_{\gC} \leq T \mid Y_0 = x \notin \gC) \\
    &=\mathbb{P}(\tau_{\gC^c}\le T \mid Y_0 = x \notin \gC) \\
    & =1-\mathbb{P}(\tau_{\gC^c}\geq T \mid Y_0 = x \notin \gC) \\
    &=1-\spy(x,T).
\end{aligned}
\end{equation}
Since~\eqref{eq:safe_prob_pde_reversed} is a linear PDE, plugging in~\eqref{eq:safe_rec_prob_relation} we have
\begin{equation}
\partial_T^\beta \rpy(x,T)=\mathcal L \rpy(x,T),\qquad x\in \gC^c,\ T>0.
\end{equation}
By definition of the recovery probability~\eqref{eq:recovery_probability}, we get
\begin{equation}
\begin{aligned}
    \rpy(x,T)=1, & \qquad x\in \gC,\ T>0,\\
    \rpy(x,0)=\mathds{1}_{\gC}(x), &\qquad x\in \R^n.
\end{aligned}
\end{equation}
\end{proof}

\begin{remark}
    Although we consider safety probability and recovery probability of systems with both jumps and memory, the proof entails that the effect of jumps and memory can be separately analyzed, resulting in space- and time-fractional terms in the PDE characterization, respectively. This is verified in the experiments in Section~\ref{sec:rec_prob_exp}.
\end{remark}

\section{Experiments}
\label{sec:experiments}


This section presents several experimental results demonstrating the effectiveness of the fractional PDE characterization of long-term probabilities for stochastic systems with jumps and memory. We also demonstrate how physics-informed machine learning can be leveraged based on the derived PDEs, for efficient estimation of probability values of diverse systems dynamics.

\subsection{Recovery Probabilities}
\label{sec:rec_prob_exp}


We first study the recovery probability~\eqref{eq:recovery_probability} for stochastic systems with jumps and memory, and demonstrate the effectiveness of the proposed fractional PDE characterization. Specifically, we consider the following case study: 

\begin{itemize}
    \item System~\eqref{eq:x_trajectory} with $f \equiv 0.8$ and $\sigma \equiv 0.4$, with $\alpha = 1$ (with Lévy jumps) and $\beta = 1$ (without fractional time).
    \item System~\eqref{eq:x_trajectory} with $f \equiv 0.3$ and $\sigma \equiv 0.6$, with $\alpha = 2$ (without jumps) and $\beta = 0.4$ (with fractional time).
\end{itemize}
The safe set is defined as $\gC = \{x : x > 1\}$. We obtain reference estimates using Monte Carlo (MC) simulation and solve the corresponding fractional PDEs. 
For comparison, Fig.~\ref{fig:rec_space_pde_vs_mc} and Fig.~\ref{fig:rec_time_pde_vs_mc} visualize the MC estimates and proposed PDE solutions across the state space at different time horizons. 
We also include results for the corresponding systems without jumps or memory, together with their associated classical (non-fractional) PDE formulations. 
The results show that the space- and time-fractional PDEs accurately characterize the long-term risk probabilities for systems with Lévy jumps and memory effects, respectively, across a range of initial states $x$ and time horizons $T$. 
Importantly, these systems exhibit long-term probability values that differ substantially from those of the same underlying dynamics when jumps or memory are absent. 
As a result, classical convection–diffusion PDEs cannot capture these behaviors, highlighting that fractional PDEs are essential for accurately characterizing risk in stochastic systems with jumps and memory.

\begin{figure}[t]
    \centering
    \includegraphics[width=0.99\columnwidth]{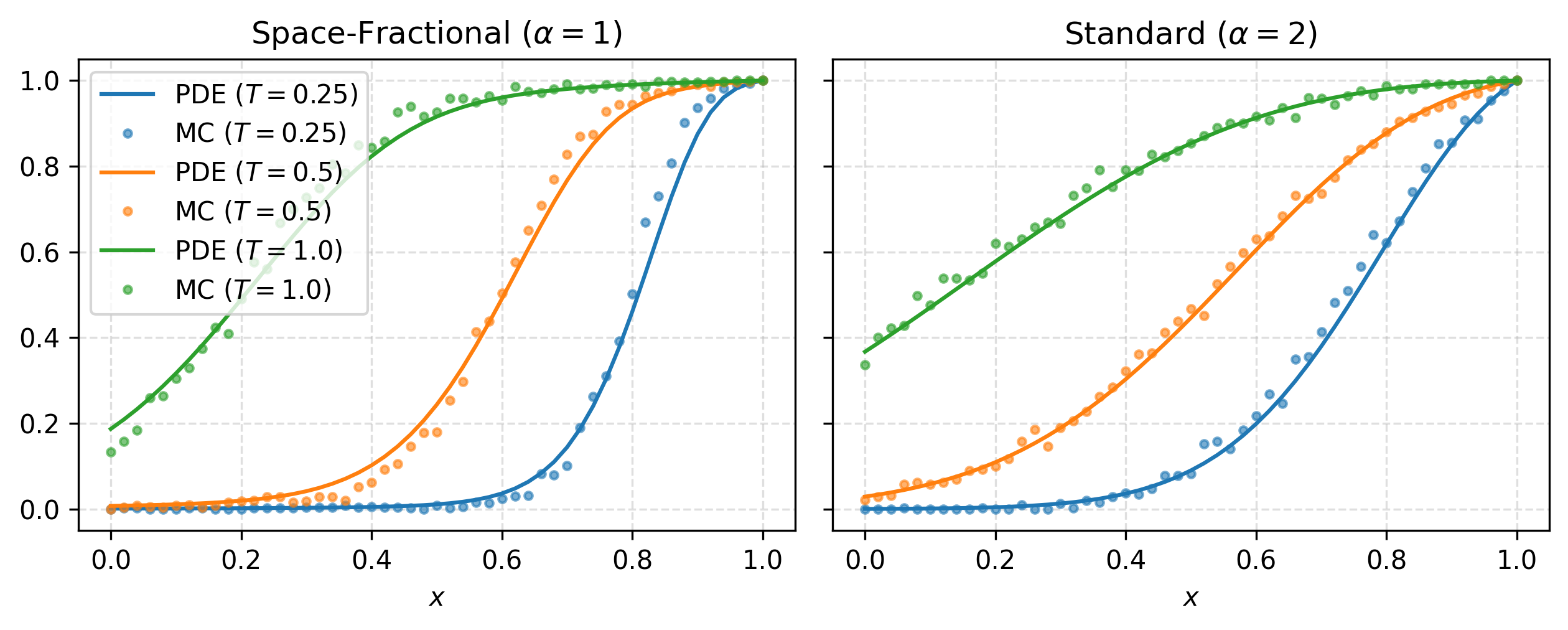}
    \caption{Recovery probability with and without Lévy jumps.
    }
    \vspace{-1.2em}
    \label{fig:rec_space_pde_vs_mc}
\end{figure}

\begin{figure}[t]
    \centering
    \includegraphics[width=0.99\columnwidth]{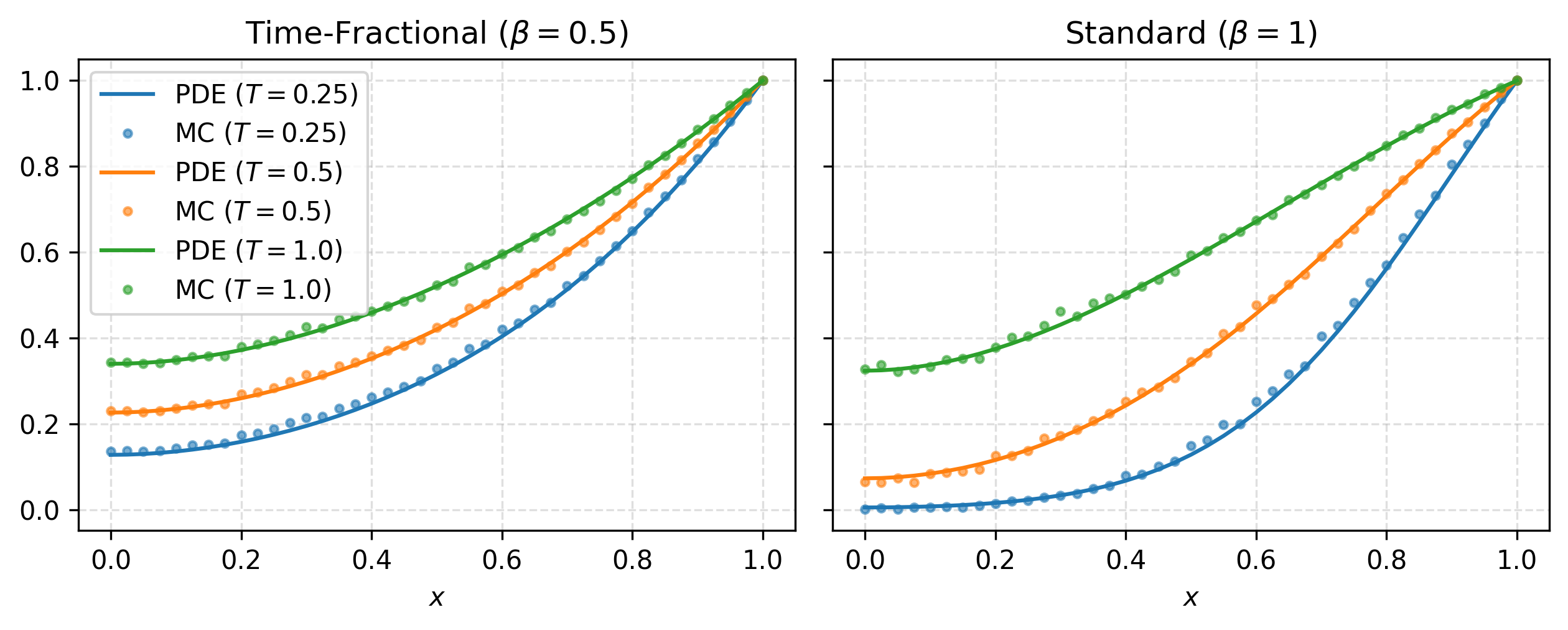}
    \caption{Recovery probability with and without time change.
    }
    \vspace{-1.5em}
    \label{fig:rec_time_pde_vs_mc}
\end{figure}

\subsection{Safety Probabilities}
\label{sec:safe_prob_exp}

Next, we show that the derived PDE characterization enables physics-informed learning methods to efficiently estimate long-term safety probabilities~\eqref{eq:safety_probability} across diverse systems. Specifically, we consider the safety probability for a diverse class of systems~\eqref{eq:x_trajectory} and~\eqref{eq:time_change} in two dimensions with Lévy jumps ($\alpha=1.5$) and fractional time change ($\beta=0.7$). We train a physics-informed Fourier neural operator (PINO)~\cite{li2024physics} to learn the corresponding time-space fractional PDE~\eqref{eq:safety_pde} to estimate the safety probabilities.

To evaluate generalization, the system dynamics are sampled from a randomized family. Specifically, for the system dynamics function $f := [f_1, f_2]$ corresponding to the two spatial coordinates $x_1$ and $x_2$, both $f_1$ and $f_2$ are independently drawn from
\begin{align}
\label{eq:training_family}
f(x_1,x_2)
&=
c_0
+
c_{1,1}x_1
+
c_{1,2}x_2
+
c_{2,1}x_1^2
+
c_{2,2}x_2^2 \notag\\
&+
c_{12}x_1x_2
+
c_{3,1}|x_1|^{1.5}
+
c_{3,2}|x_2|^{1.5} \notag\\
&+
\sum_{k_1=1}^{K_1}\sum_{k_2=1}^{K_2}
\left(
\frac{
a_{k_1,k_2}\sin\bigl(2\pi(s_1k_1x_1+s_2k_2x_2)\bigr)
}{
(k_1^2+k_2^2)^{d/2}
}
\right. \notag\\
&\left.
+
\frac{
b_{k_1,k_2}\cos\bigl(2\pi(s_1k_1x_1+s_2k_2x_2)\bigr)
}{
(k_1^2+k_2^2)^{d/2}
}
\right),
\end{align}
where the coefficients are independently sampled from prescribed bounded domains. We consider the spatial domain $(x_1, x_2) \in [0,1]^2$ and the time horizon $T \in [0,1]$, with safe set $\gC = \{x_1 < 1, x_2 < 1\}$. We generate ground-truth solutions over the entire space-time domain for 1000 randomly sampled dynamics $f$, among which 800 are used for training and 200 for testing.

The model is trained using the physics-informed loss
\begin{equation}
\mathcal{L}_{\mathrm{loss}}
=
\mathcal{L}_{\mathrm{data}}
+
\lambda_{\mathrm{PDE}}\mathcal{L}_{\mathrm{PDE}}
+
\lambda_{\mathrm{BC}}\mathcal{L}_{\mathrm{BC}},
\end{equation}
where $\mathcal{L}_{\mathrm{data}}=\|\hat{\spy}-\spy_{\mathrm{true}}\|_2^2$ is the mean-squared error loss between the prediction $\hat{\spy}$ and the ground truth $\spy_{\mathrm{true}}$, $\mathcal{L}_{\mathrm{PDE}}$ is the fractional PDE residual loss with weight $\lambda_{\mathrm{PDE}}$, and $\mathcal{L}_{\mathrm{BC}}$ is the boundary conditions residual loss with weight $\lambda_{\mathrm{BC}}$. After 100 epochs of training, the in-distribution test relative $L^2$ error is $9.77\times10^{-3}$, indicating that the learned operator accurately predicts safety probabilities for unseen systems within this diverse family of system dynamics.

We further evaluate the model on out-of-distribution (OOD) dynamics of the form
\begin{align}
\label{eq:test_ood}
f_1(x_1,x_2)
&=1.32x_1+1.32x_2+1, \notag\\
f_2(x_1,x_2)
&=10|x_1|^{1.6}
-5|x_2|^{1.6}
+2\cos(3\pi x_2)
+1 \notag\\
&+20\sin(0.96x_1+0.36x_2+\pi).
\end{align}
Fig.~\ref{fig:2D_ood_results} presents the ground-truth safety probability, the PINO predicted solution, and the corresponding error at different time instances. The results show that the model can produce accurate predictions even for OOD dynamics (relative $L^2$ error $2.37\times10^{-2}$), which are fundamentally challenging for existing sampling-based methods (Table~\ref{tab:pino_mc_comparison}).

\begin{table}[t]
    \centering
    \footnotesize
    \setlength{\tabcolsep}{5pt}
    \renewcommand{\arraystretch}{1.0}
    \caption{OOD prediction time and error comparison.}
    \label{tab:pino_mc_comparison}
    \vspace{-0.5em}
    \begin{tabular}{lcc}
        \toprule
         & MC & PINO \\
        \midrule
        Time (s) / Rel. $L^2$ Error & 93.42 / $9.06{\times}10^{-2}$ & \textbf{0.26} / $\mathbf{2.37{\times}10^{-2}}$ \\
        \bottomrule
    \end{tabular}
    \vspace{-0.5em}
\end{table}

\begin{figure}[t]
 \centering
 \includegraphics[width=0.49\textwidth]{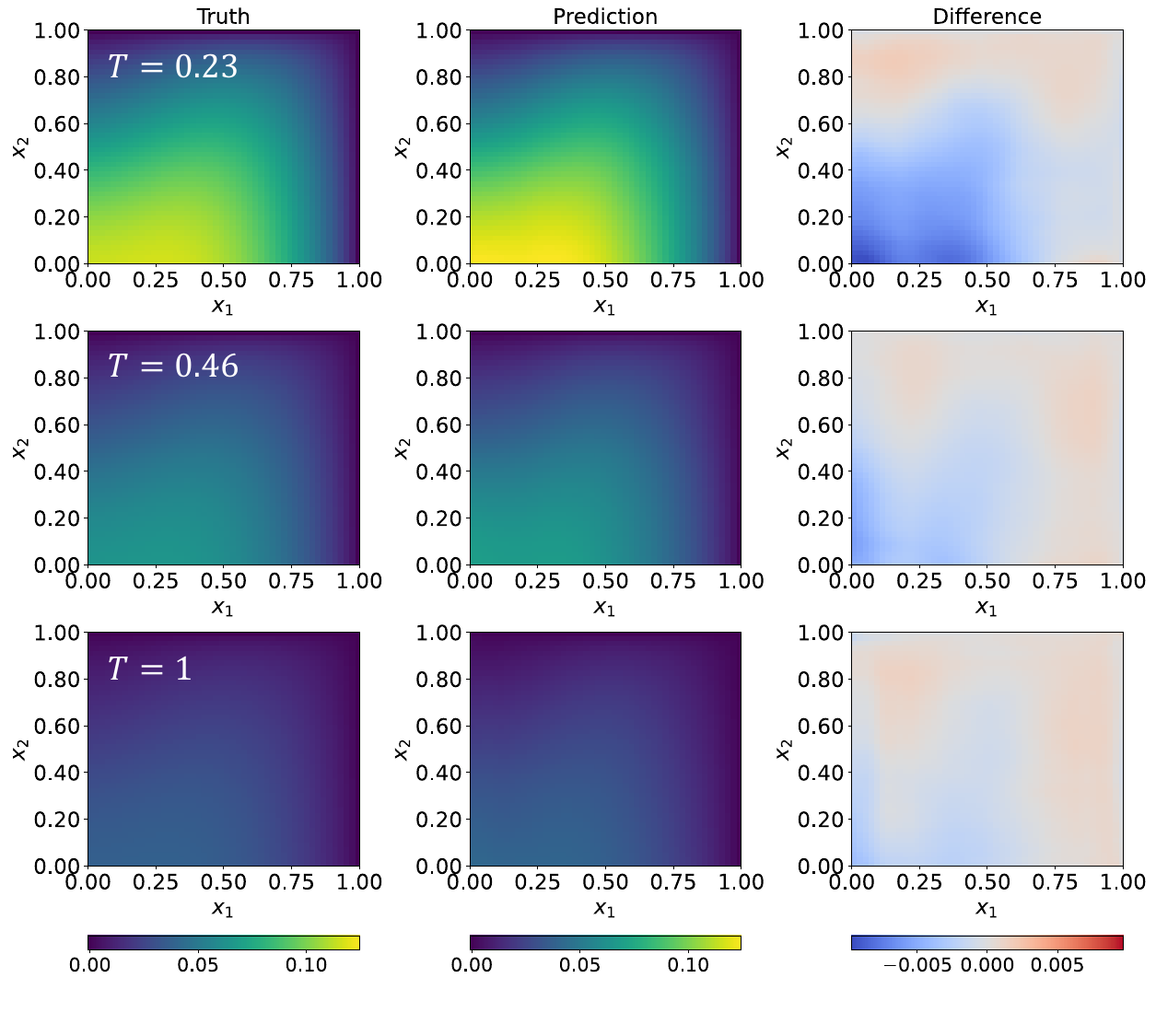}
 \vspace{-1.5em}
 \caption{Safety probability for $2$D system with OOD dynamics. 
 }
 \vspace{-1em}
 \label{fig:2D_ood_results}
\end{figure}

\section{Conclusions}

In this paper, we studied long-term safety and recovery probabilities for stochastic dynamical systems with jumps and memory effects. We showed that these probabilities admit an exact characterization through fractional PDEs, extending existing risk analysis beyond Gaussian disturbances and incorporating fractional time-change dynamics. This formulation provides a principled framework that transforms stochastic safety evaluation into the solution of deterministic fractional PDEs. Building on this connection, we demonstrated that physics-informed learning can efficiently approximate these solutions and enable risk quantification across diverse initial states and system configurations. Future work includes developing improved computational methods for high-dimensional systems.

\section*{Appendix}



We describe the experimental setup. For recovery probability (Section~\ref{sec:rec_prob_exp}), the space-fractional case uses a generator-based discretization with implicit Euler, while MC estimates are obtained via continuous-time Markov chain (CTMC) simulation. For the time-fractional case, the Caputo derivative is discretized using the $L_1$ scheme, and MC estimates are computed via a subordination-based method.
For safety probability (Section~\ref{sec:safe_prob_exp}), ground-truth solutions are obtained by solving the fractional PDE with a sparse CTMC generator and implicit $L_1$ scheme. The neural operator takes input $[f_1,f_2,x_1,x_2,t]$ and outputs the safety probability, trained with $\lambda_{\mathrm{PDE}}=0.01$ and $\lambda_{\mathrm{BC}}=1$. Experiments are run on an RTX 5080 GPU and AMD 9800X3D CPU.


\bibliography{citation.bib}

\end{document}